 \newtheorem{Theorem}{Theorem}
\newtheorem{Lemma}{Lemma}
\newtheorem{Example}{Example}
\newtheorem{Proposition}{Proposition}
\newtheorem{Definition}{Definition}
\newcommand{\dmp}{$d$-\textit{MP}}
\newcommand{\dmps}{$d$-\textit{MP} }
\newcommand{\dlmps}{$(d, \lambda)$-$MP$ }
\newcommand{\dlmp}{$(d, \lambda)$-$MP$}
\newcommand{\dff}{$d$-\textit{FFV}}
\newcommand{\dffs}{$d$-\textit{FFV} }
\newcommand{\rel}{$R_d$}
\journal{To be determined}
\begin{document}

\begin{frontmatter}



\title{Assessing the reliability of multistate flow networks considering distance constraints}


\author[inst1]{Majid Forghani-elahabad}

\affiliation[inst1]{organization={Center of Mathematics, Computing, and Cognition, Federal University of ABC},
            city={Santo André},
            postcode={09280-550}, 
            state={SP},
            country={Brazil}}



\begin{abstract}
Evaluating the reliability of complex technical networks, such as those in energy distribution, logistics, and transportation systems, is of paramount importance. These networks are often represented as multistate flow networks (MFNs). While there has been considerable research on assessing MFN reliability, many studies still need to pay more attention to a critical factor: transmission distance constraints. These constraints are typical in real-world applications, such as Internet infrastructure, where controlling the distances between data centers, network nodes, and end-users is vital for ensuring low latency and efficient data transmission.
This paper addresses the evaluation of MFN reliability under distance constraints. Specifically, it focuses on determining the probability that a minimum of $d$ flow units can be transmitted successfully from a source node to a sink node, using only paths with lengths not exceeding a predefined distance limit of $\lambda $. We introduce an effective algorithm to tackle this challenge, provide a benchmark example to illustrate its application and analyze its computational complexity.
\end{abstract}



\begin{keyword}
Multistate flow networks \sep Network reliability \sep  Distance limitation\sep Minimal paths \sep Algorithms
\end{keyword}

\end{frontmatter}


\section{Introduction}
Stochastic or multistate flow network flow networks (MFNs) in which the arcs and maybe the nodes, and accordingly the entire network, can have more than two states have been desirable in the past decades due to their ability to model many real-world systems such as communication and telecommunication, logistic and transportation, power transmission and distribution, manufacturing systems and so forth~\cite{hao2020general, jane2017distribution, forghani2022improvedress, niu2020capacity, forghani2023usage, xu2022multistate, yeh2021novel, forghani2019iise, niu2017evaluating-MC, yeh2022rail, forghani2016amm, yeh2021reliability, forghani2019ijrqse, xiahou2023reliability, yeh2021novel, forghani2017ieee, nazarizadeh2022analytical}. 
For instance, in a manufacturing supply chain, nodes can represent suppliers or sources of raw materials, production or manufacturing facilities, distribution centers, warehouses, storage facilities, and markets or destinations. The arcs can represent the connections from suppliers to production facilities, from facilities to distribution centers or warehouses, and from distribution centers to customers or markets. In such a network, each node and arc can exist in various states. For instance, node states can vary depending on the availability of resources, production rate, machine reliability, etc. Arc states can vary due to factors such as road conditions, transformer availability, and their capacities.

Reliability indices are of great importance in evaluating the performance and quality of service of real-world systems, among which the $s$-$t$ terminal network reliability has been desirable in MFNs in recent decades. The basic definition of this index, denoted normally by $R_d$, is the probability of successfully transmitting a minimum of a given $d$ units of goods, commodities, or data from a source node to a destination node through the network. Engineers and designers can make informed decisions on network design, maintenance, and operation by accurately assessing the \(R_d\) to ensure reliable and efficient performance. Many exact and approximation algorithms have been proposed in the literature based on minimal paths (MPs)~\cite{forghani2023usage, forghani2019iise, forghani2017ieee, niu2020finding-MP, forghani2020assa, yeh2005novel, forghani2019ress, niu2021reliability, yeh2008simple, forghani2023improved, kozyra2023usefulness} or minimal cuts~\cite{forghani2016amm, forghani2019ijrqse, forghani2014ieee, kozyra2021innovative, forghani2019jcs, lamalem2020efficient, yeh2022application, forghani2021chapter, lin2009two, mansourzadeh2014comparative,huang2019reliability, forghani20223} to compute $R_d$. 
Chen et al.~\cite{chen2017search} employed the breadth-first search technique to introduce an enhanced MP-based algorithm that avoids generating duplicate solution vectors. 
Niu et al.~\cite{niu2020finding-MP} established a connection between reliability assessment and circulation problems, resulting in an efficient algorithm capable of solving feasible circulations for flow network reliability evaluation. 
The authors in~\cite{jane2008practical} introduced a novel direct decomposition algorithm for assessing network reliability without the prerequisite of having MPs or MCs in advance. 
In~\cite{forghani2023improved}, the authors delved into MP-based techniques and harnessed vectorization strategies to enhance the solution's performance. The authors then improved their algorithm by employing parallelization techniques and providing an efficient technique to avoid generating duplicate solutions~\cite{forghani2023usage}.  
Yeh~\cite{yeh2005novel} presented an innovative method for determining \dmp s, using a cycle-checking approach to validate each candidate. 
Bai et al.\cite{bai2016improved} explored the simultaneous determination of \dmp s for all possible demand levels $d$, proposing a recursive technique.
Niu et al.~\cite{niu2017new} tackled the issue by providing lower bounds on arc capacities and introducing a novel technique to identify duplicate solutions, thus presenting an efficient MC-based algorithm for problem-solving. 
The authors in~\cite{forghani2013ijor} reviewed several available results in the literature and showed that some results need to be revised in some cases. They then proposed an efficient MC-based algorithm for the reliability evaluation of MFNs that uses an improved approach for detecting duplicate solutions. The authors improved this algorithm in~\cite{forghani2016amm}.
Huang et al.~\cite{huang2019reliability} proposed a novel algorithm for the reliability evaluation of MFNs using a group approach that combines concepts from both MC and MP techniques. 

To enhance the practicality of the problem, several researchers have explored the integration of cost (budget) constraints into the reliability assessment of MFNs~\cite{niu2020capacity, forghani2019iise, forghani2019ijrqse, kozyra2023usefulness, lin2009two, niu2012reliability}.
Niu and Xu~\cite{niu2012reliability} contributed by proposing an enhanced algorithm for MFN reliability evaluation under cost constraints, incorporating a cycle-checking technique into their approach.
Considering budget constraints, Forghani-elahabad and Kagan~\cite{forghani2019iise} introduced an improved algorithm for MFN reliability assessment. Furthermore, they outlined how this algorithm could be effectively employed for evaluating the reliability of communication networks within smart grids. 
In a real-world application, \cite{lin2009two} modeled a manufacturing system, incorporating unreliable and multistate nodes and arcs with cost attributes. They presented an MP-based algorithm to assess the system's reliability while considering budget constraints. 
In~\cite{kozyra2023usefulness}, the author conducted a comprehensive review of MP- and MC-based approaches in the literature for calculating reliability under budget and maintenance cost constraints.
Moreover, some researchers have considered the time limitation on the flow transmission on a path resulting in the quickest path reliability problem and its extensions~\cite{forghani20223, forghani2015ieee, lin2003extend, forghani2015ress, yeh2009simpleQPR, forghani2022cnmac}.

In addition to time and cost considerations, the limitation on transmission distance plays a pivotal role in shaping the performance of an MFN. The distance between nodes within the network can exert a significant influence on the flow of goods and commodities, subsequently impacting the network's overall efficiency and reliability~\cite{cancela2011polynomial, zhang2018diameter}. To illustrate, let us consider a transportation network scenario where a delivery truck must cover extensive distances to reach its destination. As the distance increases, the likelihood of delays or breakdowns escalates, leading to a heightened risk of system failure and a decrease in the transportation network's performance. Thus, optimizing transmission distances stands as a crucial imperative in the design and management of MFNs across various real-world applications, encompassing transportation, communication, and distribution networks.

While the distance limitation is essential in several real-world systems such as Internet infrastructure, a few researchers have considered and studied this constraint in the reliability evaluation of MFNs~\cite{zhang2018diameter}. However, many researchers have investigated this issue in assessing the reliability of binary-state flow networks~\cite{cancela2011polynomial, canale2014monte, cancela2013monte, canale2015diameter}. In a binary-state network, the components, including nodes and arcs, and consequently the entire network, exhibit just two states. These two states often signify a binary choice, such as whether the component is open or closed~\cite{yeh2022application, cancela2011polynomial, yeh2022new}.
The authors in~\cite{cancela2011polynomial} proposed an innovative algorithm aimed at identifying and eliminating irrelevant arcs in the process of evaluating the reliability of a binary-state flow network when subjected to distance constraints. Studying the distance limitation in MFNs, the authors in~\cite{zhang2018diameter} have introduced the concept of irrelevant arcs in such networks and presented a strategy to remove them. Then, the authors proposed an approximation algorithm to evaluate the network reliability under distance constraints.

Expanding upon the foundational concept of MFN reliability, we define \(R_{(d,\lambda)}\)  as the reliability of an MFN with a demand level of \(d\) while subject to a distance constraint of \(\lambda\). This reliability measure quantifies the likelihood of successfully transmitting a minimum of \(d\) flow units from a source node to a target destination node, considering only the MPs with lengths not exceeding \(\lambda\).
A commonality in research on both binary-state flow networks MFNs is the identification of irrelevant arcs, defined as arcs that do not contribute to flow transmission due to distance constraints. Researchers have typically offered methods to detect and eliminate these irrelevant arcs, subsequently assessing the network's reliability post-removal. However, it is essential to acknowledge that these strategies often come with technical challenges in defining irrelevant arcs and the associated detection methods, not to mention their computationally expensive nature.
In this work, we introduce the notion of irrelevant MPs and propose an alternative approach that eliminates the need to identify or remove irrelevant arcs. Instead, our method directly sets the flow on irrelevant MPs to zero to satisfy the distance limitation. We demonstrate the correctness of our method through several results, compute its complexity results, and illustrate it through a known benchmark example.

We consider the following assumptions throughout this work.
\begin{enumerate}
    \item Each node is considered deterministic and exhibits perfect reliability.
    \item The capacity of each arc follows a random integer value distribution based on a predefined probability function.
    \item The capacities of individual arcs are statistically independent from one another.
    \item The network adheres to the flow conservation law \cite{ahuja1995network}. 
    \item All minimal paths are predetermined and provided in advance.
    \item Every arc within the network is part of at least one minimal path connecting node 1 to node \(n\).
\end{enumerate}

The remainder of this paper is structured as follows. 
Section~\ref{sec:network-modeling} introduces the necessary notations and terminology and lays out the foundational concepts for the problem's formulation.
Section~\ref{sec:algorithm} offers the primary findings and outlines an effective algorithm to tackle the problem.
An illustrative example and the complexity results are given in Section~\ref{sec:example-complexity}.
We conclude the work with the final remarks in Section~\ref{sec:conclusion}.


\section{Network modeling and problem formulation}\label{sec:network-modeling}

\subsection{Notations and nomenclature}

Let us consider a multistate flow network (MFN), represented as $G(N$, $A$, $M$, $L)$. Here, $N = \{1, 2, \cdots, n\}$ signifies the collection of nodes and $A = \{a_1, a_2, \cdots, a_m\}$ represents the ensemble of arcs within the network. The vector $M = (M_1$, $M_2$, $\cdots, M_m)$ encompasses the maximum arcs' capacities, where $M_i$ denotes the utmost capacity of arc $a_i$, applicable to all $i = 1, 2, \cdots, m$. Correspondingly, the vector $L = (l_1, l_2, \cdots, l_m)$ serves as the length vector, with $l_i$ denoting the length of arc $a_i$, spanning $i = 1, 2, \cdots, m$. Notably, $n$ and $m$ denote the total number of nodes and arcs within the network, while nodes 1 and $n$ are expressly designated as the source and sink nodes, respectively. 
For instance, as depicted in Fig.~\ref{figure01}, the network comprises the set of nodes $N=\{$1, 2, 3, 4, 5$\}$ and the set of arcs  $ A = \{a_1,\cdots, a_8\}$.
One can consider  $M=($3, 2, 2, 1, 2, 1, 3, 2$)$ and $L=($1, 2, 1, 3, 2, 1, 2, 1$)$ respectively as the capacity and length vectors for this network. This way, for instance, $ M_5 = 2 $ and \(l_5=2\) signify respectively the maximum capacity and the length of arc $a_5$.

Consider a current system state vector (SSV) denoted as $X = (x_1$, $x_2$, $\cdots, x_m)$, where each component $x_i$ is a random integer ranging from 0 to $M_i$, representing the current capacity of arc $a_i$. For example, in the context of Fig.~\ref{figure01}, an SSV like $X = ($2, 2, 0, 0, 1, 0, 2, 0$)$ is valid, adhering to the constraint $0 \leq X \leq M$.
A path is defined as a sequence of adjacent arcs that connect node 1 to node $n$, while a minimal path (MP) is a path that does not contain any cycles. For instance, $P = {a_1, a_4, a_6, a_7}$ represents an MP in Fig.~\ref{figure01}. Let us denote all the MPs in the network as $P_1, P_2, \cdots, P_p$, so $p$ is the total number of MPs. For each $P_j$, let $LP_j$ represent its length and $CP_j(X)$ denote its capacity under the SSV,~$X$, for $j = 1, 2, \cdots, p$.
Hence, one notes that the maximum capacity of $P_j$ within the network $G(N, A, M, L)$ is $CP_j(M)$. The length of $P_j$ and its capacity under $X$ are computed as follows.
\begin{align}
& LP_j=\sum_{i:\ a_i\in P_j}l_i && \& &&   CP_j(X)=\min\{x_i|a_i\in P_j\}
\end{align}
For instance, we have \(LP_1 = l_1+l_4+ l_6 + l_7 = 1+3+1+2=7\) and $ CP_1(M) = \min\{M_1, M_4, M_6, M_7\} = \min\{3, 1, 1, 3\} $ for the network given in Fig.~\ref{figure01} with $M=($3, 2, 2, 1, 2, 1, 3, 2$)$ and $L=($1, 2, 1, 3, 2, 1, 2, 1$)$.
Let $e_i = (0, \cdots, 0, 1, 0, \cdots, 0)$ be an SSV in which the capacity level is 1	for $a_i$ and $0$ for the other arcs.
Let also $d$ be a non-negative integer number that represents the required flow to be transmitted from node 1 to node $n$ through the network, and $\lambda $ be a given transmission distance limit.
Assume that 
Finally, let $V(X)$ be the maximum flow of the network from node 1 to node $n$, and $e_i = (0,\ \cdots,\ 0,\ 1,\ 0,\ \cdots, 0)$ be an SSV in which the capacity level is 1 for $a_i$ and 0 for the other arcs. For instance, for the network of Fig.~\ref{figure01} and $X = ($2, 2, 0, 0, 1, 0, 2, 0$)$, we have $V(X)= 3$ and $V(X-e_2)=1$.
\begin{figure}[H]
\centering
\includegraphics[width=.25\textwidth]{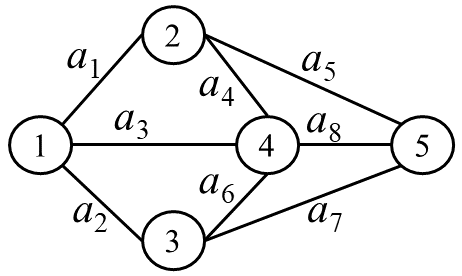}
\caption{A benchmark network example taken from~\cite{forghani2019ress}.}
\label{figure01}
\end{figure}

For any pair of system state vectors (SSVs), denoted as $X = (x_1$, $x_2$, $\cdots,\ x_m)$ and $Y = (y_1,\ y_2,\ \cdots,\ y_m)$, we define $X$ as less than or equal to $Y$, represented by $X\leq Y$, when $x_i\leq y_i$ holds for each $i = 1, 2, \cdots, m$. Additionally, we state that $X$ is strictly less than $Y$, indicated by $X<Y$, when $X\leq Y$ and there exists at least one index $j=1, 2, \cdots, m$ for which $x_j< y_j$.
A vector $X\in \Psi$ earns the label of \lq\lq minimal\rq\rq~if there is no other vector $Y\in \Psi$ such that $Y<X$. It is important to note that for a vector $X$ to be considered minimal within $\Psi$, $X$ does not need to be less than or equal to every other vector; instead, the crucial criterion is that no other vector in $\Psi$ is less than $X$.

\subsection{Preliminaries}

Supposing that $d$ units of flow are transmitting from node 1 to node $n$ through the network, one notes that these flow should be divided among all the MPS~,$P_1,\ P_2,\ \cdots,\ P_p $. Letting $f_j$ denote the amount of flow being transmitted through $P_j$, we have a vector $F=(f_1,\ f_2,\ \cdots, f_p)$ which is called the feasible flow vector (FFV) at transmitting $d$ units of flow and denoted by \dff.
Let us begin with the case with no distance limitations.
To compute the reliability of an MFN in such a case, the concept of \dmps has been defined as follows in the literature~\cite{forghani2019iise, forghani2017ieee, niu2020finding-MP, forghani2019ress, lin1995reliability, yeh2007improved}.
\begin{Definition}
    A system state vector $X$ is a \dmps if and only if $V(X) = d$ and $V(X-e_i)=d-1$ for every $i$ with $x_i>0$.
\end{Definition}
\begin{Lemma}
    Assume that $\Psi(G, d) = \{X\ |\ 0\leq X\leq M\ \&\  V(X)\geq d\}$ and $\Psi_{\min}(G,d)$ is the set of all the minimal vectors in $\Psi(G, d)$. Then, $X$ is a \dmps if and only if $X\in \Psi_{\min}(G, d)$.
\begin{proof}
    If $X$ is a \dmp, then $V(X)=d$. On the other hand, for every $Y<X$, there exists an index $1\leq i\leq m$ such that $Y\leq X-e_i<X$,and hence $V(Y)\leq V(X-e_i)=d-1<d$. Accordingly, $Y\notin \Psi(G,d)$ signifying that $X$ is a minimal vector in $\Psi(G,d)$.
    Now, let $X\in \Psi_{\min}(G, d)$. If $V(X)> d$, then there exists an index $1\leq i\leq m$ such that $V(X-e_i)\geq d$ which contradicts $X$ being a minimal vector in $\Psi(G, d)$. Thus, $V(X)=d$. Similarly, one can shows that $V(X-e_i)<d$ for every $i$ with $x_i>0$, and thus $X$ is a \dmp.
\end{proof}
\end{Lemma}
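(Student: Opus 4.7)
The plan is to prove the two implications separately, leaning on the integrality of the capacities and the standard monotonicity fact that the max-flow function $V(\cdot)$ is nondecreasing in the arc capacities and drops by at most one when a single unit of capacity is removed, i.e. $V(X)-1\leq V(X-e_i)\leq V(X)$ whenever $x_i>0$.

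For the forward direction, I would start from a \dffs $X$. By definition $V(X)=d$, so immediately $X\in \Psi(G,d)$. To see that $X$ is minimal, I would take an arbitrary $Y\in \Psi(G,d)$ with $Y<X$ and seek a contradiction. Since components are nonnegative integers, $Y<X$ forces $y_j<x_j$ for some index $j$; combined with $Y\leq X$ this yields $Y\leq X-e_j$. Monotonicity then gives $V(Y)\leq V(X-e_j)=d-1$, contradicting $Y\in \Psi(G,d)$. Hence no such $Y$ exists and $X\in\Psi_{\min}(G,d)$.

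For the backward direction, I would pick $X\in\Psi_{\min}(G,d)$ and verify the two conditions of the definition. Fix any index $i$ with $x_i>0$; then $X-e_i\geq 0$ and $X-e_i<X$, so minimality of $X$ forces $X-e_i\notin \Psi(G,d)$, giving $V(X-e_i)\leq d-1$. Together with the lower bound $V(X-e_i)\geq V(X)-1\geq d-1$, this pins down $V(X-e_i)=d-1$ and forces $V(X)\leq d$. Since $X\in\Psi(G,d)$ gives $V(X)\geq d$, we conclude $V(X)=d$, completing the verification that $X$ is a \dmp.

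The main subtlety I expect is the exactness $V(X)=d$ in the backward direction: a priori one only has $V(X)\geq d$ from membership in $\Psi(G,d)$, and it is tempting (but not quite correct) to try to cut $V(X)$ down to $d$ directly by reducing components one at a time, which would require induction on the excess $V(X)-d$. The cleaner route, which I would take, is to derive $V(X)\leq d$ as a by-product of the $V(X-e_i)\leq d-1$ analysis together with the one-step drop bound $V(X-e_i)\geq V(X)-1$. One should also cover the degenerate case $X=\mathbf{0}$ briefly: then $V(X)=0$, so membership in $\Psi(G,d)$ forces $d=0$ and the second \dffs condition is vacuous.
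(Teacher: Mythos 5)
Your proof is correct and follows essentially the same route as the paper's: the forward direction via picking $j$ with $y_j<x_j$ so that $Y\leq X-e_j$ and applying monotonicity of $V$, and the backward direction via minimality of $X$ combined with the unit-drop bound $V(X-e_i)\geq V(X)-1$. If anything, you are slightly more careful than the paper, which in the backward direction only states $V(X-e_i)<d$ and leaves both the exact equality $V(X-e_i)=d-1$ required by the definition and the degenerate case $X=\mathbf{0}$ implicit, whereas you fill in both.
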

As a result, if one finds all the \dmp s, say $\Psi_{\min}(G,d) = \{X^1,\ \cdots,\ X^{\sigma}\}$, and let $S_i = \{X\ | X^i\leq X\leq M\}$, then it is easy to see that $\Psi(G, d) = \cup_{i=1}^{\sigma}S_i$, and therefore the network reliability for the case with no distance limitations, denoted by \rel, can be computed by calculating a union probability.
Hence, the determination of all the \dmp s turns out to be an essential stage of indirect reliability evaluation approaches.

Many algorithms in the literature have used the following results to search for all the \dmp s in an MFN~\cite{jane2017distribution, forghani2019iise, forghani2017ieee, niu2020finding-MP, yeh2005novel, forghani2019ress, kozyra2023usefulness, forghani2022cnmac, lin2011using}.

\begin{Theorem}
    If $X=(x_1,\ \cdots,\ x_m)$ is a \dmp, then there exists a \dff, say $F=(f_1,\ \cdots,\ f_p)$, that satisfies the following systems.
    \begin{eqnarray}\label{th: dfftodmp}
        \begin{cases}
        (i)\ f_1+f_2+\cdots + f_p=d, &\\
        (ii)\ 0\leq f_j\leq \min\{CP_j(M),d\}, & j=1,2,\cdots,p, \\
        (iii)\ \sum_{j:\ a_i\in P_j}f_j \leq {M_i}, & i=1,2,\cdots,m,\\
        (iv)\ x_i=\sum_{j:\ a_i\in P_j}f_j, & i=1,2,\cdots,m.
        \end{cases}
    \end{eqnarray}
\end{Theorem}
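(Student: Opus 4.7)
The plan is to construct $F$ from a maximum integer flow realizing $V(X)=d$ via the classical flow decomposition theorem, and then to exploit the $d$-MP property to show the decomposition saturates every used arc, yielding the equality in (iv).

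First I would invoke that $X$ being a $d$-MP gives $V(X)=d$, so there exists an integral maximum flow $\phi=(\phi_1,\ldots,\phi_m)$ in $G$ under the capacity vector $X$. After cancelling any internal circulations in $\phi$ (which leaves the source-to-sink value unchanged and keeps the flow integer), the standard flow decomposition theorem represents $\phi$ as a non-negative integer combination of flows along simple $s$-$t$ paths. Since every simple path from node $1$ to node $n$ is, by the paper's own definition, one of the MPs $P_1,\ldots,P_p$, grouping contributions along each MP yields non-negative integers $f_1,\ldots,f_p$ with $\phi_i = \sum_{j:\,a_i\in P_j} f_j$ for every $i$ and $\sum_{j=1}^p f_j = d$. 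This immediately establishes (i).

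Next I would verify (ii) and (iii) directly. Since each $f_j\geq 0$ and $\sum_k f_k = d$, we have $f_j \leq d$; moreover, for any arc $a_i\in P_j$, $f_j\leq \phi_i \leq x_i\leq M_i$, so $f_j\leq CP_j(X)\leq CP_j(M)$, which gives (ii). For (iii), summing the path contributions along arc $a_i$ gives $\sum_{j:\,a_i\in P_j} f_j = \phi_i \leq x_i \leq M_i$. The crux is identity (iv), where I would argue that the decomposed flow must in fact saturate each arc $a_i$ with $x_i>0$. Suppose for contradiction that $\phi_i \leq x_i - 1$ for some such $i$. Then $\phi$ remains a feasible flow of value $d$ under the capacity vector $X-e_i$, so $V(X-e_i)\geq d$, contradicting the $d$-MP hypothesis $V(X-e_i)=d-1$. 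Hence $\phi_i=x_i$ whenever $x_i>0$, while $x_i=0$ trivially forces $\phi_i=0$. In both cases $x_i = \sum_{j:\,a_i\in P_j} f_j$, establishing (iv).

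The hard part, beyond the routine appeal to max-flow theory, will be the flow decomposition step itself: one must guarantee that the chosen max flow $\phi$ can be expressed in terms of simple $s$-$t$ paths only, with no residual circulations interfering with the identification of the path flows with MPs. This is handled by preemptively cancelling any cycle flow in $\phi$ before decomposing, which is a standard procedure that preserves both integrality and the source-to-sink value. Once this technicality is dispatched, the remainder of the argument reduces to the three direct checks above and the single non-trivial use of the $d$-MP hypothesis in (iv).
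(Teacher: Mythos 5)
Your proof is correct, but there is nothing in the paper to compare it against: the paper states this theorem without proof, presenting it as a known result imported from the cited literature. Your argument supplies the standard self-contained justification that underlies those references, and each step checks out. The max-flow integrality theorem gives an integral flow of value $d$ under the capacity vector $X$; cancelling circulation components and applying flow decomposition expresses it as integer flows along simple $1$-$n$ paths, which by the paper's definition of MP are exactly $P_1,\ldots,P_p$, yielding (i); the bounds $f_j\leq \phi_i\leq x_i\leq M_i$ for every $a_i\in P_j$ together with $\sum_k f_k=d$ give (ii) and (iii); and the contradiction argument for (iv) — if $\phi_i\leq x_i-1$ then $\phi$ is feasible under $X-e_i$, so $V(X-e_i)\geq d$, contradicting the \dmps hypothesis — is precisely the one non-routine use of the hypothesis, applied correctly. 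One point you use implicitly and could state explicitly: after cycle cancellation the new flow is still feasible under $X$ (cancellation only decreases arc flows componentwise) and retains value $d$; this is what simultaneously licenses the decomposition into $s$-$t$ paths only and the feasibility claim $\phi\leq X-e_i$ in the contradiction step. With that sentence added, the proof is complete.
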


It is essential to recognize that while any \dmp, say $X$, corresponds to a \dff, say $F$, satisfying the system~\eqref{th: dfftodmp}, the reverse may not hold. In other words, not every $X$ obtained by solving the system above is necessarily a \dmp. Therefore, the available algorithms in the literature first find all the solutions of this system and then check each solution for being a \dmp ~\cite{jane2017distribution, forghani2019iise, forghani2017ieee, niu2020finding-MP, yeh2005novel, forghani2019ress, kozyra2023usefulness, forghani2022cnmac, lin1995reliability, lin2011using}. One of the most efficient techniques to check each solution of the system~\eqref{th: dfftodmp} to be a \dmps is the following lemma, which was presented in~\cite{yeh2005novel} and improved in~\cite{forghani2017ieee}.
\begin{Lemma}\label{lem:check}
    A solution $X$ of the system~\eqref{th: dfftodmp} is a \dmps if and only if no directed cycle exists in the network $G$ under $X$.
\end{Lemma}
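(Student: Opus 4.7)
The plan is to exploit the fact that the last equation of system~\eqref{th: dfftodmp} turns the SSV $X$ itself into a bona fide $s$-$t$ flow. Indeed, setting $x_i=\sum_{j:\,a_i\in P_j}f_j$ and combining with (i) produces a non-negative integer arc function whose value at every internal node is conserved and whose total value equals $d$. I will identify $X$ with this flow throughout, write $\mathrm{supp}(X)=\{a_i:x_i>0\}$ for its support, and use the characterization of \dmp s as minimal elements of $\Psi(G,d)$ from the earlier lemma.

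For the direction that a \dmps has no directed cycle, I will argue the contrapositive. Assume there is a directed cycle $C=(a_{i_1},\dots,a_{i_k})$ contained in $\mathrm{supp}(X)$. Define $X'$ by $x'_{i_t}=x_{i_t}-1$ for $t=1,\dots,k$ and $x'_i=x_i$ otherwise. Cancelling one unit around a cycle preserves flow conservation at every node and does not change the net value, so $X'$ is still an $s$-$t$ flow of value $d$. Since $X'\leq X-e_{i_1}$, this gives $V(X-e_{i_1})\geq d$ with $x_{i_1}>0$, contradicting the definition of \dmp. Hence no cycle can live inside $\mathrm{supp}(X)$.

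For the converse, suppose $G$ under $X$ is acyclic. Because $X$ already realizes a flow of value $d$, we have $X\in\Psi(G,d)$, so by the earlier lemma it suffices to show $X$ is minimal in $\Psi(G,d)$. Assume for contradiction that some $Y<X$ lies in $\Psi(G,d)$; choosing any $i$ with $y_i<x_i$ we get $V(X-e_i)\geq d$. By integrality of max flow, there exists an integer $s$-$t$ flow $g$ of value $d$ with $g\leq X-e_i$. Then $x-g$ is non-negative (since $g\leq X-e_i\leq X$), not identically zero (since $(x-g)_i\geq 1$), and its net value at every node is $d-d=0$, i.e. it is a non-trivial non-negative circulation. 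By the flow decomposition theorem, the support of a non-zero non-negative circulation must contain a directed cycle, and $\mathrm{supp}(x-g)\subseteq\mathrm{supp}(X)$, so $G$ under $X$ contains a directed cycle, a contradiction.

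The key technical ingredient is the flow-decomposition argument in the backward direction, together with the integrality of maximum flow that lets me pick an integer witness $g$ whose difference with $X$ is a circulation; everything else is bookkeeping once $X$ is properly viewed as a flow via clause~(iv) of~\eqref{th: dfftodmp}. I expect the only potential pitfall to be the non-uniqueness of the path decomposition giving $X$, which is irrelevant here because the argument only uses the arc-level flow function $x$ and its support, not the particular path flows $f_j$.
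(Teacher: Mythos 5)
Your proposal is correct, but there is no in-paper proof to compare it against: the paper states Lemma~\ref{lem:check} without proof, importing it from~\cite{yeh2005novel} (where the cycle-checking criterion was introduced) and~\cite{forghani2017ieee} (where it was improved). Judged on its own, your argument is sound and self-contained. The decisive observation—that clause~(iv) of system~\eqref{th: dfftodmp} makes $X$ itself an integral $s$-$t$ flow of value $d$, so that $V(X)\geq d$ and $X\in\Psi(G,d)$ hold automatically—is precisely what licenses both directions, and it is also why the lemma is true only for solutions of the system rather than for arbitrary vectors $0\leq X\leq M$. Your forward direction (cancel one unit around a cycle in the support, observe the result is a value-$d$ flow dominated by $X-e_{i_1}$, contradicting $V(X-e_{i_1})=d-1$) and your backward direction (take an integral witness $g\leq X-e_i$ of value $d$, note that $x-g$ is a nonzero nonnegative circulation, and extract a directed cycle from its support via flow decomposition) are the standard max-flow arguments, and the appeal to the paper's Lemma~1 to convert minimality in $\Psi(G,d)$ into the \dmps property is legitimate because that lemma is an equivalence. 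Two points are worth making explicit in a final write-up: non-negativity of the cycle-cancelled vector $X'$ uses $x_i\geq 1$ on every cycle arc, and the existence of an integral flow of value exactly $d$ (rather than merely $\geq d$) under capacities $X-e_i$ requires integrality of max flow plus a trimming step in a path decomposition; both are routine. In spirit your proof reconstructs the cycle-checking argument of~\cite{yeh2005novel}, so it should be viewed as a faithful proof of the cited result rather than a genuinely different route.
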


Extending these results to the case with distance limitations and presenting the concept of \textit{irrelevant} MPs, we propose an efficient algorithm to compute the system reliability of an MFN under the distance limit.


\section{Searching for all the \dlmp s}\label{sec:algorithm}

\begin{Definition}
    The transmission distance for transferring $d$ units of flow from node 1 to node $n$ within the network $G$ is defined as the length of the longest MP through which at least one flow unit is being transmitted. 
\end{Definition}

One notes that there may exist different ways to transmit $d$ units of flow from node 1 to node $n$ through the network. However, once we are given the corresponding \dff, we know exactly how many flow units are being sent through each MP. To illustrate it, consider the network given in Fig.~\ref{fig:secondbench} with $M=(4, 3, 4, 5, 3, 4)$ and $L=(2, 1, 3, 2, 1, 2)$. It is easy to see that the maximum flow of the network from node 1 to node 4 is $V(M) = 9$ and that there are five MPs in the network; $P_1 = {a_1, a_4, a_6}$, $P_2 = {a_1, a_5}$, $P_3 = {a_2, a_4, a_5}$, $P_4 = {a_2, a_6}$, and $P_5 = {a_3}$. Now, to send four flow units from node 1 to node 4, we have 67 different scenarios, including $F_1= (0, 0, 0, 0, 4)$, $F_2=($0, 3, 0, 1, 0$)$, and $F_3 = ($4, 0, 0, 0, 0$)$. The associated transmission distance with $F_1, F_2$, and $F_3$ are respectively one, two, and three. Therefore, it is the corresponding \dff that determines the transmission distance. Therefore, we denote this parameter by $\Delta(G, d, F)$.

\begin{figure}
    \centering
    \includegraphics[width=0.3\linewidth]{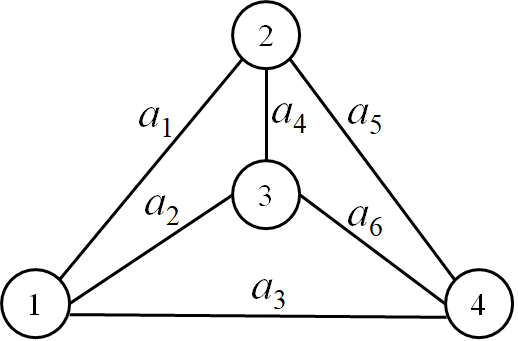}
    \caption{A simple benchmark example.}
    \label{fig:secondbench}
\end{figure}

\begin{Proposition}\label{prop:distance-cal}
    Assume that $d$ units of flow is transmitting from node 1 to node $n$ within the network $G$ and that $F=(f_1,\ f_2,\ \cdots, f_p)$ is its corresponding \dff. The transmission distance is equal to
    \begin{equation}
        \Delta(G, d, F) = \max\{LP_j\ |\ f_j>0,\ j = 1,\ 2,\ \cdots,\ p\}
    \end{equation}
\end{Proposition}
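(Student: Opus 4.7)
The plan is to derive the equality directly from the definition of transmission distance given just before the proposition. That definition declares $\Delta(G,d,F)$ to be the length of the longest MP along which at least one unit of flow is sent. First, I would observe that, by the very meaning of a \dff, the quantity $f_j$ is exactly the amount of flow routed through $P_j$; since the arc capacities, and hence the components $f_j$, are non-negative integers, the condition \textit{``at least one flow unit is being transmitted on $P_j$''} translates cleanly into $f_j \ge 1$, equivalently $f_j > 0$.

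Next, I would combine these two observations. The set of MPs carrying positive flow under $F$ is exactly $\{P_j : f_j > 0\}$, so the length of the longest such MP equals $\max\{LP_j \mid f_j > 0\}$, which is the right-hand side of the claimed formula. I would also record that $\{j : f_j > 0\}$ is nonempty whenever $d>0$ (an immediate consequence of condition (i) of system~\eqref{th: dfftodmp}, which forces $f_1+\cdots+f_p=d$), so the maximum is well-defined; the degenerate case $d=0$ can be handled by the convention that the maximum over the empty set equals $0$, consistent with the fact that no transmission takes place.

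There is essentially no technical obstacle here, as the statement is a direct unpacking of definitions. If anything, the only point deserving explicit emphasis is the identification between the informal phrase \textit{``at least one flow unit''} and the formal inequality $f_j > 0$, which relies on the integrality of the $f_j$ inherited from the integrality of the arc capacities and of $d$. I would flag this in the write-up to keep the reasoning airtight, and optionally cross-check the formula against the three scenarios $F_1,F_2,F_3$ displayed in the paragraph preceding the proposition to reassure the reader.
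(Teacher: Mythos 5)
Your proposal is correct and follows the same route as the paper, whose entire proof is the one-line remark that the result ``is concluded directly from the definition.'' Your write-up simply makes explicit what that remark leaves implicit --- the identification of ``at least one flow unit on $P_j$'' with $f_j>0$ and the well-definedness of the maximum --- which is a faithful, slightly more careful rendering of the same argument.
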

\begin{proof}
    It is concluded directly from the definition.
\end{proof}

\begin{Definition}
    A system state vector $X$ is a \dlmps candidate when there exists a \dff, say $F$, satisfying the system~\eqref{th: dfftodmp} and $\Delta(G, d, F) \leq \lambda $.
\end{Definition}
\begin{Definition}
    A \dlmps candidate, say $X$, is a (real) \dlmps if $V(X)=d$ and $V(X-e_i)<d$ for every $i$ with $x_i>0$.
\end{Definition}
\begin{Lemma}\label{lem:checkCan}
    A \dlmps candidate is a (real) \dlmps if no directed cycle exists in the network under it.
    \begin{proof}
        It can be concluded directly from the above definitions and Lemma~\ref{lem:check}.
    \end{proof}
\end{Lemma}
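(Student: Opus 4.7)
The statement is essentially a direct corollary of Lemma~\ref{lem:check} once the various definitions are unpacked, so the plan is short. First I would unpack the candidacy hypothesis: by the definition of a \dlmps candidate, $X$ arises from some \dff~$F$ satisfying system~\eqref{th: dfftodmp} together with $\Delta(G, d, F) \leq \lambda$. In particular, $X$ is a solution of~\eqref{th: dfftodmp}, which is the hypothesis that Lemma~\ref{lem:check} requires.

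Next I would feed the no-cycle hypothesis into Lemma~\ref{lem:check}, which characterises \dmp s among solutions of~\eqref{th: dfftodmp} precisely by the absence of a directed cycle in $G$ under $X$. This upgrades $X$ to a genuine \dmp, whence $V(X) = d$ and $V(X - e_i) = d - 1 < d$ for every index $i$ with $x_i > 0$. These are exactly the two maximum-flow conditions demanded by the definition of a (real) \dlmps.

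Finally, the distance bound $\Delta(G, d, F) \leq \lambda$ is inherited verbatim from the candidacy hypothesis on $F$, so no additional verification is needed. The only substantive ingredient is the appeal to Lemma~\ref{lem:check}; everything else is bookkeeping of the definitions. Accordingly I do not foresee any genuine obstacle here — the lemma is a one-line consequence of stringing the definitions together with Lemma~\ref{lem:check}, which is consistent with the author's own single-sentence argument.
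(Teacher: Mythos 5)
Your proposal is correct and takes essentially the same route as the paper: the paper's one-sentence proof (``concluded directly from the above definitions and Lemma~\ref{lem:check}'') is exactly the chain you have unpacked — candidacy gives a solution of system~\eqref{th: dfftodmp}, the no-cycle hypothesis plus Lemma~\ref{lem:check} upgrades $X$ to a \dmp, and the distance condition is carried over from the candidacy hypothesis. No gaps to report.
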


As explained earlier, even for a small benchmark network, there are numerous scenarios to transmit a given flow of magnitude $d$. It is important to note that the transmission distance varies depending on the specific \dffs being considered, each representing a unique scenario. Consequently, the task of finding all possible \dffs and their corresponding SSVs by solving the system~\eqref{th: dfftodmp}, subsequently verifying them for distance limitations and ensuring their status as (real) \dlmp, becomes a highly time-consuming endeavor.
To simplify this complexity, Zhang and Shao~\cite{zhang2018diameter} introduced the concept of irrelevant arcs in the context of MFNs. They classified these irrelevant arcs into two cases: (1) arcs not belonging to any MP and (2) arcs exclusive to MPs with lengths exceeding $\lambda $, which signifies the distance constraint. Subsequently, they proposed an approach to detect and eliminate these irrelevant arcs from the network. However, it is noteworthy to highlight that verifying all the MPs to identify every irrelevant arc is also a time-consuming process.

In light of Proposition~\ref{prop:distance-cal}, it becomes evident that the transmission distance depends on the specific \dffs and is determined via the lengths of MPs. Therefore, if one were to nullify the components of $F= ( f_1, \ \cdots,\ f_p)$ associated with MPs of lengths exceeding $\lambda $, the distance constraint would unquestionably be met by all computed \dffs. With this in mind, we introduce a novel concept: the \textit{irrelevant MP} (IMP), as defined below.

\begin{Definition}
    An MP is deemed irrelevant if its length exceeds the specified distance limitation. In other words, \(P_j\) is classified as irrelevant when \(LP_j=\sum_{i:\ a_i\in P_j}l_i> \lambda\).
\end{Definition}

Therefore, one can first solve the following system instead of the system~\eqref{th: dfftodmp} to calculate all the \dff s that satisfy the distance limitation.

\begin{eqnarray}\label{sys: dff}
        \begin{cases}
        (i)\ f_j = 0, & \text{if } LP_j>\lambda\\
        (ii)\ f_1+f_2+\cdots + f_p=d, &\\
        (iii)\ 0\leq f_j\leq \min\{CP_j(M),d\}, & j=1,2,\cdots,p, \\
        (iv)\ \sum_{j:\ a_i\in P_j}f_j \leq {M_i}, & i=1,2,\cdots,m.
        \end{cases}
\end{eqnarray}
Then, the corresponding SSV to every calculated \dff can be computed as follows.

\begin{equation}\label{eq: dffTossv}
    x_i=\sum_{j:\ a_i\in P_j}f_j,\hspace{1cm} \forall i=1,2,\cdots,m.  
\end{equation}

\subsection{The proposed algorithm}
Now, we are ready to propose an efficient algorithm to address the problem of calculating all the \dlmp s in an MFN.\\

\noindent\textbf{Algorithm~1}

\textbf{Input:} \(G(N, A, M, L)\), its MPs, the demand level of \(d\), and the distance limitation of \(\lambda\).

\textbf{Output:} All the \dlmp s.

\noindent\textbf{Step~0.} Calculate \(LP_j=\sum_{i|a_i\in P_j}l_i\), for \(j=1, 2, \cdots, p\),  and compute $ KP_j (M)$ only for the MPs with $LP_j\leq \lambda $. Let \(S=\phi\).

\noindent\textbf{Step~1.} Find a solution $ F $ by solving the system~\eqref{sys: dff}. If no more solutions are found, stop.

\noindent\textbf{Step~2.} Calculate corresponding $X$ to $ F $ via Eq.~\eqref{eq: dffTossv}. 

\noindent\textbf{Step~3.} If there is a directed cycle in $G$ under $X$, then go to Step~1 to find the next solution.

\noindent \textbf{Step~4.} If $ X $ is not duplicated, add it to $ S $. \\

We should emphasize that steps 1 and 2 of the algorithm build upon the preceding discussion, while Step 3 relies on Lemma~\ref{lem:checkCan}. Furthermore, since it is feasible to generate the same  SSV corresponding to various \dff s, Step 4 of the algorithm eliminates duplicates. Consequently, we can formulate the following theorem.

\begin{Theorem}
Algorithm~1 calculates all the \dlmp s with no duplicates.
\end{Theorem}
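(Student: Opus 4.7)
The plan is to establish three things in turn: (i) soundness, every vector added to $S$ is a genuine $(d,\lambda)$-MP; (ii) completeness, every $(d,\lambda)$-MP in the network is produced by the algorithm; and (iii) absence of duplicates. Items (i) and (iii) are handled almost by construction, so the real work is in (ii).

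For soundness, I would start from an arbitrary vector $X$ placed in $S$. By Step~2 it arises from some flow vector $F$ satisfying system~\eqref{sys: dff}. Clauses (ii)--(iv) of that system together with Eq.~\eqref{eq: dffTossv} reproduce conditions (i)--(iv) of system~\eqref{th: dfftodmp}, while clause (i) forces $f_j=0$ whenever $LP_j>\lambda$. Proposition~\ref{prop:distance-cal} then gives $\Delta(G,d,F)\le\lambda$, so $X$ is a $(d,\lambda)$-MP candidate. Since $X$ survived Step~3, no directed cycle exists under $X$, and Lemma~\ref{lem:checkCan} upgrades the candidate to a real $(d,\lambda)$-MP.

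For completeness, let $X$ be an arbitrary $(d,\lambda)$-MP. Because $V(X)=d$, Theorem~1 produces a \dffs $F=(f_1,\ldots,f_p)$ satisfying system~\eqref{th: dfftodmp} with $x_i=\sum_{j:\ a_i\in P_j}f_j$. The key point is that this $F$ must actually satisfy the stronger system~\eqref{sys: dff}: if some $f_{j_0}>0$ with $LP_{j_0}>\lambda$, then by Proposition~\ref{prop:distance-cal} we would have $\Delta(G,d,F)\ge LP_{j_0}>\lambda$, contradicting the fact that $X$ is (via $F$) a $(d,\lambda)$-MP candidate. Hence $F$ is enumerable in Step~1, Step~2 reconstructs exactly $X$ through Eq.~\eqref{eq: dffTossv} (which matches clause~(iv) of system~\eqref{th: dfftodmp}), and because $X$ is a real $(d,\lambda)$-MP Lemma~\ref{lem:checkCan} (applied in its ``only if'' direction via Lemma~1/\ref{lem:check}) guarantees that the network under $X$ is acyclic, so $X$ passes Step~3 and is inserted into $S$ on its first appearance. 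Step~4 then forbids any subsequent insertion, which handles (iii).

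The subtle point is the link between the distance-free characterization of Theorem~1 and the distance-constrained setting. One must be careful that the $F$ produced by Theorem~1 for a given $(d,\lambda)$-MP $X$ genuinely witnesses the distance constraint, that is, that it is one of the flow vectors from which $\Delta(G,d,F)\le\lambda$ can be read off. I would handle this by arguing, as above, that every FFV corresponding to $X$ via Eq.~\eqref{eq: dffTossv} shares the property $f_j=0$ whenever $LP_j>\lambda$, because any positive flow on an overly long MP would immediately violate the distance bound witnessed by $X$ itself. Once this observation is in place, the rest of the argument is bookkeeping: Step~1 eventually enumerates every such $F$, Step~2 maps it back to $X$, Step~3 filters out mere candidates that are not real, and Step~4 guarantees $X$ appears only once in $S$.
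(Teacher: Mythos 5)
Your soundness and no-duplicates arguments are fine and essentially coincide with the paper's (implicit) justification: solutions of system~\eqref{sys: dff} satisfy system~\eqref{th: dfftodmp} together with the distance bound via Proposition~\ref{prop:distance-cal}, Step~3 is covered by Lemma~\ref{lem:checkCan}, and Step~4 screens repetitions. The gap is in your completeness argument. Your ``key point'' --- that \emph{every} FFV corresponding to a \dlmps $X$ must have $f_j=0$ on each MP with $LP_j>\lambda$ --- is false, and the contradiction you derive is not a contradiction. The definition of a \dlmps candidate is existential: $X$ is a candidate when \emph{some} FFV $F'$ satisfying system~\eqref{th: dfftodmp} has $\Delta(G,d,F')\le\lambda$. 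It does not say that every FFV decomposing $X$ obeys the bound, so the particular $F$ handed to you by Theorem~1 (whose construction knows nothing about $\lambda$) may place positive flow on a long MP without contradicting candidacy; ``$X$ is a candidate via $F$'' is exactly the assertion you are not entitled to. Concretely, take two consecutive parallel sections between source and sink, each section having a short route of length $2$ and a long route of length $20$, all arcs of capacity $1$; let $X$ put one unit on every arc, with $d=2$ and $\lambda=25$. Pairing short-with-long in both ways gives $\Delta=22\le\lambda$, so $X$ is a genuine \dlmp; but pairing short-with-short and long-with-long is a different FFV for the \emph{same} $X$ with $\Delta=40>\lambda$.

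The repair is immediate and shows the detour through Theorem~1 is unnecessary: if $X$ is a \dlmp, then by the definition of candidate there exists an FFV $F$ satisfying system~\eqref{th: dfftodmp} with $\Delta(G,d,F)\le\lambda$; by Proposition~\ref{prop:distance-cal} this $F$ has $f_j=0$ whenever $LP_j>\lambda$, so it is a solution of system~\eqref{sys: dff}, and clause (iv) of system~\eqref{th: dfftodmp} says it maps back to $X$ under Eq.~\eqref{eq: dffTossv}. Step~1 therefore enumerates this $F$, Step~2 returns $X$, and your remark via Lemma~\ref{lem:check} (a real \dlmps is in particular a $d$-MP, hence cycle-free) correctly shows that $X$ survives Step~3. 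With that one substitution your proof is complete, and indeed more explicit than the paper's own treatment, which states the theorem with only an informal pointer to the preceding discussion.
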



\section{An illustrative example and the complexity results}\label{sec:example-complexity}

\subsection{An illustrative example}
A helpful method for gaining a deeper understanding of a novel approach is to examine its application through a simple benchmark network example.

\begin{Example}
Consider the depicted network in Fig.~\ref{figure01} with five nodes and eight arcs. Assume that the maximum capacity vector and the length vector are given respectively \(M=(3,2,2,1,2,1,3,2)\) and \(L=(1, 2, 1, 3, 2, 1, 2, 1)\). One needs to compute the reliability of this network for transmitting six units of flow from node 1 to node 5 with a transmission distance of not more than six. To solve this example, we need to find all the $(6, 6)$-$MP$, for which Algorithm~1 is utilized as follows.

\textit{Solution}: The demand level and distance limitation are respectively \(d=6\) and  \(\lambda=6\). We have nine MPs in this network: $ P_1=\{a_1, a_5\} $, $ P_2=\{a_2, a_7\} $, $ P_3=\{a_3, a_8\} $, $ P_4=\{a_1, a_4, a_8\} $,  $ P_5=\{a_2, a_6, a_2\} $,  $ P_6=\{a_3, a_4, a_5\} $, $ P_7=\{a_3, a_6, a_7\} $, $ P_8=\{a_1, a_4, a_6, a_7\} $, and $ P_9=\{a_2, a_6, a_4, a_5\} $.

\noindent \textbf{Step~0.} The MPs' lengths are calculated as follows. $LP_1 = 3$, 
$LP_2 = 4$, 
$LP_3 =2$, 
$LP_4 = 5$, 
$LP_5 = 4$, 
$LP_6 = 6$, 
$LP_7 = 4$, 
$LP_8 = 7$, 
and $LP_9 = 8$. 
The capacities of only the MPs with their length less than $\lambda $ are calculated as $KP_1(M) = $5, $KP_2(M) = $5, $KP_3(M) = $4, $KP_4(M) = $6, $KP_5(M) = $5, $KP_6(M) = $5, and $KP_7(M) = $6.

\noindent\textbf{Step~1.} The system~\eqref{sys: dff} has six solutions: (2,~2,~0,~1,~0,~0,~1,~0,~0), (1,~2,~1,~1,~0,~0,~1,~0,~0), (2,~2,~1,~0,~0,~0,~1,~0,~0), (2,~1,~1,~1,~0,~0,~1,~0,~0), (2,~2,~2,~0,~0,~0,~0,~0,~0), and (2,~2,~1,~1,~0,~0,~0,~0,~0).

\noindent\textbf{Step~2.} The corresponding SSVs to the calculated solutions in Step~2 are respectively: (3,~2,~1,~1,~2,~1,~3,~1), (2,~2,~2,~1,~1,~1,~3,~2), (2,~2,~2,~0,~2,~1,~3,~1), (3,~1,~2,~1,~2,~1,~2,~2), (2,~2,~2,~0,~2,~0,~2,~2), and  (3,~2,~1,~1,~2,~0,~2,~2).

\noindent\textbf{Step~3.} All the obtained SSVs meet the verification criteria, confirming that each is a $(6, 6)$-$MP$.

\noindent\textbf{Step~4.} There are no duplicates, meaning that the solution set includes all the computed vectors in Step~2.
\end{Example}

\subsection{Complexity results}
We first recall that $m, n$, and $p$ are respectively the number of arcs, nodes, and MPs in the network. The time complexity of calculating the length or capacity of each MP is $O(m)$, and accordingly, Step~0 is of the order of $O(mp)$. System~\eqref{sys: dff} can be considered one constrained Diophantine equation and thus can be solved by the proposed approach of~\cite{forghani2015conbrazil}. Hence, the time complexity for calculating each solution in the worst case, and thus the time complexity of Step~1 is $O(p)$. The time complexity of Step~2 is $O(mp)$. The time complexity of Step~3 is $O(n)$~\cite{yeh2005novel}. We use the proposed approach of~\cite{forghani2023usage}, and thus, the time complexity of removing the duplicates in Step~4 is $O(m\sigma)$, where $\sigma$ is the number of obtained solutions including the duplicates. It is noteworthy that steps~0 and~4 are executed in parallel with other steps and that steps~1 to~3 are executed $\sigma$ times in the worst case. Therefore, the time complexity of Algorithm~1 is $O(mp+ \sigma(p+mp+n) + m\sigma) = O(mp\sigma)$, and we can formulate the following theorem.

\begin{Theorem}
    The time complexity of Algorithm~1 is $O(mp\sigma)$.
\end{Theorem}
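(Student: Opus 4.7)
The plan is to bound the time of Algorithm~1 step by step, then sum the contributions over the at-most $\sigma$ iterations, where $\sigma$ denotes the number of candidate solutions produced by the Diophantine system~\eqref{sys: dff} (including duplicates). The key observation is that Step~0 is a one-time preprocessing cost, Step~4 can be amortized across all candidates via the technique of~\cite{forghani2023usage}, and Steps~1--3 are executed exactly once per candidate.

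First, I would analyze Step~0. Each MP $P_j$ contains at most $m$ arcs, so computing $LP_j=\sum_{a_i\in P_j}l_i$ takes $O(m)$ time, and likewise for $KP_j(M)$ when $LP_j\leq \lambda$. Summing over the $p$ MPs gives $O(mp)$, which dominates the initialization of $S=\phi$. Next, I would invoke the algorithm from~\cite{forghani2015conbrazil} for the constrained Diophantine equation~\eqref{sys: dff}: producing one solution vector $F\in\mathbb{Z}_{\geq 0}^p$ costs $O(p)$, giving the per-iteration cost of Step~1. Step~2 applies Eq.~\eqref{eq: dffTossv} for each of the $m$ arcs, and each $x_i$ aggregates contributions from at most $p$ MPs, yielding $O(mp)$ per iteration. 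Step~3 runs a directed-cycle detection on the subgraph induced by the positive components of $X$, which by~\cite{yeh2005novel} costs $O(n)$. Step~4 uses the duplicate-removal scheme of~\cite{forghani2023usage}, which the cited work bounds by $O(m\sigma)$ in total across all $\sigma$ candidates, rather than per candidate.

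Putting the pieces together, I would add the one-time cost of Step~0, the $\sigma$-fold cost of Steps~1--3, and the amortized cost of Step~4:
\begin{equation*}
O(mp) \;+\; \sigma\bigl(O(p)+O(mp)+O(n)\bigr) \;+\; O(m\sigma).
\end{equation*}
Since $p\leq mp$ and, under the standing assumptions that every arc lies on some MP connecting $1$ to $n$, we have $n\leq m+1$ so that $n\leq mp$ whenever $p\geq 1$, the bracketed sum collapses to $O(mp)$ per iteration. This yields an overall bound $O(mp+mp\sigma+m\sigma)=O(mp\sigma)$, which is the stated claim.

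The most delicate point, and therefore the main obstacle, is justifying that the per-solution cost of generating $F$ from~\eqref{sys: dff} is genuinely $O(p)$ rather than dependent on $\sigma$ or on the magnitudes $CP_j(M)$ and $d$; I would handle this by citing the enumeration scheme of~\cite{forghani2015conbrazil} and noting that the extra constraint $f_j=0$ for irrelevant MPs only reduces the effective number of free variables, so it cannot worsen the per-solution cost. A secondary technical point is the amortization in Step~4, for which I would appeal directly to the bookkeeping argument of~\cite{forghani2023usage} to avoid a naive $O(m\sigma^2)$ bound from pairwise comparisons. With these two citations in hand, the rest of the argument is just the additive accounting above.
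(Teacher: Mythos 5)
Your proposal is correct and follows essentially the same route as the paper: the same per-step bounds ($O(mp)$ for Step~0, $O(p)$ per solution for Step~1 via~\cite{forghani2015conbrazil}, $O(mp)$ for Step~2, $O(n)$ for Step~3 via~\cite{yeh2005novel}, and a total $O(m\sigma)$ for duplicate removal via~\cite{forghani2023usage}), summed as $O(mp+\sigma(p+mp+n)+m\sigma)=O(mp\sigma)$, exactly as in the paper. Your explicit justification that $n\leq m+1$ (so $n=O(mp)$) is a small detail the paper leaves implicit, but it does not change the argument.
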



\section{Conclusion}\label{sec:conclusion}
While distance limitations have been a subject of extensive research when it comes to evaluating the reliability of binary-state flow networks, this crucial aspect has received less attention in the context of multistate flow networks (MFNs). Given the significance of distance constraints in real-world systems like Internet infrastructure, this study dived into the problem of assessing the reliability of MFNs under these constraints.
In this work, we introduced the concept of irrelevant minimal paths (MPs). We presented an algorithm that sets the flow to zero on all relevant MPs to calculate all the \dlmp s in an MFN. We also outlined how the network's reliability can be calculated using these \dlmp s. To validate the proposed approach, we demonstrated its correctness and provided an illustration through a well-known benchmark example. Additionally, we offered insights into the computational complexity of the algorithm.

\section*{Acknowledgments}
The author thanks CNPq (grant 306940/2020-5) for supporting this work.



 \bibliographystyle{elsarticle-num} 


\end{document}